\newtheorem{mylem}{Lemma}
\begin{document}
%
\title{Wireless-powered relaying with\\ finite block-length codes}


\author{\IEEEauthorblockN{$\text{Mahdi Haghifam}^\star$, $\text{Behrooz Makki}^\dagger$, $\text{Masoumeh Nasiri-Kenari}^\star$, $\text{Tommy Svensson}^\dagger$, $\text{Michele Zorzi}^\ddagger $ }

\IEEEauthorblockA{
$^\star$Electrical Engineering Department, Sharif University of Technology,
Tehran, Iran. \\ haghifam$\_$mahdi@ee.sharif.edu, mnasiri@sharif.edu \\
$^\dagger$Department of Signals and Systems, Chalmers University of Technology, Gothenburg, Sweden. \\ $\left\lbrace \text{behrooz.makki, tommy.svensson} \right\rbrace\text{@chalmers.se}$\\
$^\ddagger$ Department of Information Engineering, University of
Padova, Padova, Italy.\\ $\text{zorzi@dei.unipd.it}$
}
}


%

\maketitle

\begin{abstract}
This paper studies the outage probability and the throughput of amplify-and-forward relay networks with wireless information and
energy transfer. We use some recent results on finite block-length codes to analyze the system performance in the cases with short codewords. Specifically, the time switching relaying and the power splitting relaying protocols are considered for energy and information transfer. We derive tight approximations for the outage probability/throughput. Then, we analyze the outage probability in asymptotically high signal-to-noise ratios. Finally, we use numerical results
to confirm the accuracy of our analysis and to evaluate the system performance in different scenarios. Our results indicate that, in delay-constrained scenarios, the codeword length affects the outage probability/throughput of the joint energy and information transfer systems considerably.     
\end{abstract}


%
\IEEEpeerreviewmaketitle

\vspace{-3mm}
\section{Introduction}
\vspace{-1mm}
Relay-assisted communication is one of the promising techniques
that have been proposed for wireless networks. The main idea of a relay network is to improve the data transmission efficiency by implementation of intermediate relay nodes which support data
transmission from a source to a destination. These devices are usually powered by fixed but limited batteries. Thus, wireless networks may suffer from the short lifetime
and require periodic battery replacement/recharging. However,
the battery replacement may be costly or infeasible in, e.g., toxic environments. It has been recently proposed to use
radio-frequency (RF) signals as a means of wireless energy
transfer. Significant advances in circuit design for RF energy transfer make the usage of energy transfer a viable solution for prolonging the lifespan of wireless networks, e.g., \cite{tutorwet,cirs}.\par
Amplify-and-forward (AF) and decode-and-forward (DF) cooperative networks with wireless energy and information transfer are investigated in \cite{nasir13,jad2,poor,poor_rand,finite1}. In \cite{nasir13}, two relaying protocols, namely, power splitting relaying (PSR) and time switching relaying (TSR), with simultaneous wireless information and energy transfer, are proposed and evaluated in terms of system throughput. Moreover, \cite{poor} derives different power allocation strategies for energy harvesting DF relay networks with multiple source-destination pairs and a single energy
harvesting relay. Multi-relay networks with information and energy transfer are also studied in \cite{poor_rand,finite1} using stochastic geometry.\par
In the literature, so far, the main focus has been on investigating the energy harvesting wireless networks performance based on Shannon's results on the achievable rates. To be specific, most results are obtained under the assumption that the achievable rate is given by $\log(1+x)$ with $x$ standing for the signal-to-interference-plus-noise ratio (SINR) at the receiver. This is an acceptable assumption when long codewords are used. On the other hand, in many applications, such as Internet of Things in which employing energy harvesting nodes is deemed as a core element, the codewords are required to be short to meet latency requirements \cite{delay1}. As a result, analyzing the system performance using Shannon's capacity formula does not provide realistic results in the aforementioned scenarios. Recently, an accurate approximation of the achievable rate with finite blocklength was presented in \cite{bl_1}. Using the results in \cite{bl_1}, in \cite{mak2} and \cite{behmain} the performance of incremental redundancy HARQ and spectrum sharing cognitive radio were analyzed, respectively. Also, \cite{fb_rele} studies the capacity of DF relay networks with fixed energy supply. Thus, it is interesting to study the performance of relay networks with wireless energy and information transfer in the presence of finite-length codewords.\par
In this paper, we investigate the outage probability and the throughput of AF relay networks with wireless energy and information transfer. Two
well-known protocols for energy and information transfer, namely, TSR and PSR are considered. We use the recent results of \cite{bl_1} on the achievable rates of finite block-length codes to analyze the system performance. We derive two tight closed-form approximations for the outage probability of the network (Lemmas 2-3). Moreover, we find the outage probability in the high signal-to-noise ratio (SNR) regime and the effect of the block length on the outage probability (Lemma 4). It is worth noting that the results in this paper can be readily extended to analyze AF relay networks with fixed energy supply, as well.  \par
The numerical and analytical results show that, for a given number of information nats, there exists an optimal finite codeword length that maximizes the throughput (Fig. \ref{fig:thr_L}). Furthermore, at high SNRs, the outage probability decreases linearly with $\frac{\log(\text{SNR})}{\text{SNR}}$ (Lemma 4). Finally, for a broad range of parameter settings, a lower outage probability is achieved by TSR compared to PSR (Fig. \ref{fig:err_P}). On the other hand, for a fixed number of information nats, the PSR protocol outperforms the TSR approach in terms of throughput (Figs. \ref{fig:thr_L},\ref{fig:thr_ceof}). 
\section{SYSTEM MODEL} \label{sec:sysmod}
We consider a relay-assisted communication setup consisting of a source, a relay and a destination. The source and the destination nodes have constant energy supply. On the other hand, the relay has no fixed energy supply and receives its energy from the source RF signal wirelessly. The relay is assumed to employ the AF protocol, because of its hardware simplicity\cite{rel_snr}. The channel coefficients of the source-relay and the relay-destination links are denoted by $h_{\text{sr}}$ and $h_{\text{rd}}$, respectively. The channel coefficients remain constant during the channel coherence time and then change according to their probability density functions (PDFs). Also, we define the channel gains as $g_{\text{sr}}=|h_{\text{sr}}|^2$ and $g_{\text{rd}}=|h_{\text{rd}}|^2$. The results are obtained for Rayleigh fading channels where the channel gains PDFs are given by $f_{\vartheta}\left(x\right)=e^{-x}$ for $\vartheta=\{\text{sr},\text{rd}\}$\footnote{ To simplify the presentation of the analytical results, the results are
presented for normalized fading random variables where $\mathbb{E}\left[g_{\text{sr}}\right]=\mathbb{E}\left[g_{\text{rd}}\right]=1$
with $\mathbb{E}\left[\cdot\right]$ being the expectation operator. Extension of the results to the cases with different average channel gains is straightforward.}. Slotted operation is assumed, where each time slot has duration $T$ seconds. Also, the occupied bandwidth is $W$ Hertz. Hence, in each time slot, each codeword length can maximally be $L \triangleq TW$ channel uses.\par
The system performance is analyzed for the PSR and TSR protocols. These schemes have been first introduced in \cite{nasir13} and we adapt them based on our problem formulation/channel model as follows.  
\subsection{Time Switching Relaying }\label{subsec:TSR}
In this protocol, the energy transfer and data communication protocol is in three phases as follows.
In the first phase, of length $\theta L$ channel uses (cu), $\theta\in[0,1],$  the relay harvests energy from the source transmitted energy signal. Let $P_\text{s}$ denote the source transmission power. Then, the baseband signal model in this period is given by
\begin{equation} \label{eq:bas_chan_en}
\bm{y}^{\text{e}}_{\text{r}}=\sqrt{P_\text{s}}h_{\text{sr}}\bm{x}_{\text{s}}^{\text{e}} + \bm{z}^{\text{e}}_{\text{r}}.
\end{equation}
Here, $\bm{y}^{\text{e}}_{\text{r}},\bm{z}^{\text{e}}_{\text{r}},\bm{x}_{\text{s}}^{\text{e}}\in \mathbb{C}^{L\theta}$. Also, $\bm{x}_{\text{s}}^{\text{e}}$ is the unit-power source energy signal and $\bm{z}^{\text{e}}_{\text{r}}$ is the additive white Gaussian noise (AWGN) of the energy receiver of the relay and follows $\mathcal{CN}\left(0,\left(\sigma^{\text{e}}_\text{r}\right)^2\bm{I}_{L\theta}\right)$, where $\bm{I}_n$ denotes the identity matrix of size $n$. For simplicity, we set $\sigma^{\text{e}}_\text{r}=0$. This is motivated by the fact that in many practical energy harvesting systems the harvested energy due to noise is negligible, e.g., \cite{cirs}. In this way, the energy harvested by the relay at the end of Phase 1 is given by $E_\text{r}^{\text{TSR}}=L\theta\eta g_\text{sr}P_{\text{s}}$, with $0 < \eta \leq 1 $ representing the efficiency factor of the energy harvesting circuit. With no loss of generality, we set $\eta=1$. If the relay is successfully powered up in the first phase, the second phase of length $\frac{1-\theta}{2}L$ cu starts and the source sends information to the relay. Let $\bm{x}_{\text{s}}^{\text{i}}\in \mathbb{C}^{\frac{L\left(1-\theta\right)}{2}}$ and $\bm{y}_\text{r}^\text{i} \in \mathbb{C}^{\frac{L\left(1-\theta\right)}{2}}$ denote the source information signal and its corresponding signal received by the relay, respectively. Hence, the channel is modeled as
\begin{equation}\label{eq:bas_chan_it1}
\bm{y}^{\text{i}}_{\text{r}}=\sqrt{P_\text{s}} h_{\text{sr}}\bm{x}_{\text{s}}^{\text{i}} + \bm{z}_{\text{r}}^{\text{i}}.
\end{equation}
Here, $\bm{z}_{\text{r}}^{\text{i}}$ is AWGN of the information receiver of the relay with covariance matrix $\left(\sigma_{\text{r}}^{\text{i}}\right)^2\bm{I}_{\frac{(1-\theta)L}{2}}$. Finally, the relay uses the last subslot of length $\frac{1-\theta}{2}L$ cu to amplify and forward the information  signal to the destination. In this way, at the end of Phase 3, the destination's received signal is given by
\begin{equation}\label{eq:bas_chan_it2}
\bm{y}_{\text{d}}=\sqrt{P_{\text{r}}^{\text{TSR}}}h_{\text{rd}}\bm{x}_{\text{r}}^{\text{i}} + \bm{z}_{\text{d}},
\end{equation}
where $\bm{z}_{\text{d}}$ and $P_{\text{r}}^{\text{TSR}}$ are the AWGN at the destination with covariance matrix $\sigma_{\text{d}}^2\bm{I}$ and the relay's transmission power, respectively. Also, in (\ref{eq:bas_chan_it2}), $\bm{x}_{\text{r}}^{\text{i}}=\frac{\bm{y}_{\text{r}}^{\text{i}}}{\sqrt{P_\text{s}g_{\text{sr}}+\left(\sigma^{\text{i}}_\text{r}\right)^2}}$ is the unit-power relay's information signal. Furthermore, we assume that the relay uses all of the harvested energy in Phase 1 to forward the message.  Assuming that the energy required for signal processing at the relay is negligible compared to the energy for data transmission, $P_{\text{r}}^{\text{TSR}}$ is given by
$P^{\text{TSR}}_{\text{r}}=\frac{E^{\text{TSR}}_\text{r}}{\frac{1-\theta}{2}L}=\frac{2\theta}{1-\theta}P_\text{s}g_\text{sr}$.
Substituting (\ref{eq:bas_chan_it1}) in (\ref{eq:bas_chan_it2}) and with some manipulation, the SNR at the destination is obtained as
\begin{equation} \label{eq:snr_TSR}
\gamma_{\text{d,TSR}}=\frac{\frac{P_{\text{s}}g_{\text{sr}}}{\left(\sigma_{\text{r}}^{\text{i}}\right)^2}\frac{P_{\text{r}}^{\text{TSR}}g_{\text{rd}}}{\sigma_{\text{d}}^2}}{\frac{P_{\text{s}}g_{\text{sr}}}{\left(\sigma_{\text{r}}^{\text{i}}\right)^2}+\frac{P_{\text{r}}^{\text{TSR}}g_{\text{rd}}}{\sigma_{\text{d}}^2}+1}\simeq P_{\text{s}}g_{\text{sr}}\frac{2\theta g_{\text{rd}}}{2\theta g_{\text{rd}}+1-\theta},
\end{equation}  
where the last equality is obtained by neglecting $1$ in the denominator. This is in harmony with, e.g., \cite{af1,nasir13} and an appropriate assumption in moderate/high SNRs. Also, with no loss of generality, we assume $\sigma_{\text{r}}^{\text{i}}=\sigma_{\text{d}}=1$.  Since we have set the noise power to $1$, we refer to $P_\text{s}$ (in dB, $10\log_{10} P_\text{s}$) as average SNR. The information signal length for the TSR protocol is given by $L_\text{I,TSR}=\frac{L\left(1-\theta\right)}{2}$ cu. Also, in each communication block, $K$ information nats are encoded into a codeword of length $L_{\text{I,TSR}}$, and the codeword rate is given by $R_{\text{TSR}}\triangleq \frac{K}{L_\text{I,TSR}}=\frac{2K}{L\left(1-\theta\right)}$.
\subsection{Power Splitting Relaying}\label{subsec:PSR}
This protocol comprises two phases of duration $\frac{L}{2}$ cu. In the first phase, the source transmits a unit-variance message $\bm{x}_\text{s}\in \mathbb{C}^{\frac{L}{2}}$ to the relay. Then, the relay splits the source transmitted signal into two streams, one for energy harvesting and one for data reception. Let $\alpha$ denote the power splitting factor. In this way, the received signal at the input of the energy harvester circuit and the information reception circuit can be expressed as
\begin{align}\label{eq:chan_mod_ps1}
\bm{y}_{\text{r}}^{\text{e}}&=\sqrt{\alpha P_\text{s}} h_{\text{sr}}\bm{x}_{\text{s}} + \bm{z}_{\text{r}}^{\text{e}},\\
\bm{y}_{\text{r}}^{\text{i}}&=\sqrt{\left(1-\alpha\right) P_\text{s}} h_{\text{sr}}\bm{x}_{\text{s}} + \bm{z}_{\text{r}}^{\text{i}},\label{eq:chan_mod_ps2}
\end{align}
respectively. Thus, the harvested energy at the end of Phase 1 is given by $E_{\text{r}}^{\text{PSR}}= \alpha \frac{L}{2}g_\text{sr}P_{\text{s}}$. In the second phase, the relay amplifies and forwards the source signal to the destination. Therefore, the received signal at the destination can be written as 
\begin{equation}
\bm{y}_{\text{d}}=\sqrt{P_{\text{r}}^{\text{PSR}}}h_{\text{rd}}\bm{x}_{\text{r}}^{\text{i}} + \bm{z}_{\text{d}},
\end{equation}
where  $\bm{x}_{\text{r}}^{\text{i}}=\frac{\bm{y}_{\text{r}}^{\text{i}}}{\sqrt{\left(1-\alpha\right)P_\text{s}g_{\text{sr}}+\left(\sigma^{\text{i}}_\text{r}\right)^2}}$ is the unit-power relay's information signal. Also, the relay's transmission power is given by $P^{\text{PSR}}_\text{r}=\frac{E_{\text{r}}^{\text{PSR}}}{\frac{L}{2}}=  \alpha  P_{\text{s}} g_\text{sr}$ is the relay's transmission power. With some manipulations, the SNR at the destination is given by
\begin{equation}\label{eq:snr_PSR}
\gamma_{\text{d,PSR}}=\frac{\frac{P_{\text{s}}g_{\text{sr}}}{\left(\sigma_{\text{r}}^{\text{i}}\right)^2}\frac{P^{\text{PSR}}_{\text{r}}g_{\text{rd}}}{\sigma_{\text{d}}^2}}{\frac{P_{\text{s}}g_{\text{sr}}}{\left(\sigma_{\text{r}}^{\text{i}}\right)^2}+\frac{P^{\text{PSR}}_{\text{r}}g_{\text{rd}}}{\sigma_{\text{d}}^2}+1}\simeq \left(1-\alpha\right) P_{\text{s}}g_{\text{sr}}\frac{\alpha g_{\text{rd}}}{\alpha g_{\text{rd}}+1-\alpha},
\end{equation}  
where the approximation is based on the same arguments as in (\ref{eq:snr_TSR}). The information signal length for the PSR protocol is given by $L_\text{I,PSR}=\frac{L}{2}$ cu. In this way, in each communication block, $K$ information nats are encoded into a codeword of length $L_{\text{I,PSR}}$, and the codeword rate is $R_{\text{PSR}}\triangleq \frac{K}{L_\text{I,PSR}}=\frac{2K}{L}$.
\subsection{Review of the results for the finite-blocklength channel coding }
In this part, we review some of the results of \cite{bl_1} on the
achievable rates of finite-block length codes as follows. Define an $ ( L , N, P, \epsilon )$ code as the collection of
\begin{itemize}
\item An encoder $ \Upsilon : \{ 1,\hdots, N \} \mapsto \mathcal{C}^L$ which maps the message $n \in \{ 1,\hdots, N \}$ into a length-$L$ codeword $\bm{x}_n\in
\lbrace{ \bm{x}_1,. . ., \bm{x}_N\rbrace}$ satisfying the power constraint
\begin{equation}
\frac{1}{L}||\bm{x}_j||^2\leq P, \forall j
\end{equation}
\item A decoder $\Lambda$ : $\mathcal{C}^
L \mapsto \{ 1,\hdots, N \}$ satisfying the maximum error probability constraint
\begin{equation}
\max_{\forall j} \ \mathbb{P}\left[\Lambda \left(y(j)\right)\neq j\right]\leq \epsilon,
\end{equation}
with $y(j)$ denoting the channel output induced by the
transmission of codeword $j$.
\end{itemize}
The maximum achievable rate of the code (in nats per channel use (npcu)) is defined as
\begin{equation}\label{eq:max_rate_def}
R_{\text{max}}=\sup \left\lbrace\frac{\log N}{L}\big| \exists ( L , N, P, \epsilon ) \ \text{code} \right\rbrace \quad (\text{npcu}).
\end{equation}
Considering such codes, \cite{bl_1} has recently
presented a very tight approximation for the maximum achievable rate (\ref{eq:max_rate_def}) as
\begin{equation}\label{eq:max_rate_def2}
\begin{aligned}
R_\text{max} \left( L,P,\epsilon \right)&= \sup \left\lbrace R : \mathrm{Pr}\left(\log\left(1+g P\right)< R \right)< \epsilon \right\rbrace\\
&\hspace{2cm}-\mathcal{O}\left(\frac{\log L}{L}\right),  \quad (\text{npcu})
\end{aligned}
\end{equation}
where $g$ is the instantaneous channel gain. Alternatively, the error probability for the transmission with fixed-codeword rate $R$ in a block-fading channel is given by \cite[Eq. 11]{behmain}
\begin{equation}\label{eq:error_main}
\epsilon \simeq \mathbb{E} \left[ Q\left(\frac{\sqrt{L}\left(\log \left(1+\gamma\right)-R\right)}{\sqrt{1-\frac{1}{\left(1+\gamma\right)^2}}}\right)\right],
\end{equation}
where $Q(x)=\int_{x}^{\infty} \frac{1}{\sqrt{2\pi}} e^{-\frac{x^2}{2}} \text{d}x$ denotes the complementary Gaussian cumulative distribution function and the expectation is over the received SNR $\gamma$.
\section{Performance Analysis} \label{sec:non}
According to (\ref{eq:snr_TSR}) and (\ref{eq:snr_PSR}), the SNRs at the destination for the PSR and TSR protocols have the same form
\begin{equation}\label{eq:cdf_koli}
\gamma_{\text{d,A}}=c_{\text{1,A}} P_\text{s}g_{\text{sr}}\frac{c_{\text{2,A}}g_{\text{rd}}}{c_{\text{2,A}} g_{\text{rd}}+1} \quad \text{A} \in \{\text{TSR,PSR}\},
\end{equation}
where
\begin{equation}\label{eq:cT}
c_{\text{1,TSR}}=1,\quad c_{\text{2,TSR}}=\frac{2\theta}{1-\theta},
\end{equation}
for the TSR protocol, and
\begin{equation}\label{eq:cP}
c_{\text{1,PSR}}=1-\alpha,\quad c_{\text{2,PSR}}=\frac{\alpha}{1-\alpha},
\end{equation} 
for the PSR approach. Using (\ref{eq:error_main}), the error probabilities for the TSR and PSR protocols are given by
\begin{equation}\label{eq:err_asli}
\mathcal{P}_{\text{out,A}}=\mathbb{E}\left[Q\left(\frac{\sqrt{L}_{\text{I,A}}\left(\log \left(1+\gamma_{\text{d,A}}\right)-R_{\text{A}}\right)}{\sqrt{1-\frac{1}{\left(1+\gamma_{\text{d,A}}\right)^2}}}\right)\right] ,
\end{equation}
for $\text{A}\in\{\text{TSR},\text{PSR}\}$. Here, in (\ref{eq:err_asli}), expectation is taken with respect to $\gamma_\text{d,A}$ defined in (\ref{eq:cdf_koli}). Also, the throughputs of the TSR and PSR protocols are found as
\begin{equation}
\mathcal{T}_{\text{TSR}}=\frac{R_{\text{TSR}}\left(1-\theta\right)}{2}\left(1-\mathcal{P}_{\text{out},\text{TSR}}\right),
\end{equation}
and
\begin{equation}
\mathcal{T}_{\text{PSR}}=\frac{R_{\text{PSR}}}{2}\left(1-\mathcal{P}_{\text{out},\text{PSR}}\right),
\end{equation}
respectively. In this way, to analyze the network outage probability and throughput, the final step is to derive (\ref{eq:err_asli}). With the considered fading models, the outage probability in (\ref{eq:err_asli}) does not have a closed-form expression. For this reason, Lemmas 2-3 present tight approximations of  $\mathcal{P}_{\text{out,A}}$. Then, in Lemma 4, we derive high-SNR approximation for the outage probability. Let us first derive the cumulative distribution function (CDF) of the received SNR (\ref{eq:cdf_koli}) as follows.
\begin{mylem}
\normalfont
Define the random variable $Z_{\text{A}}\triangleq c_{\text{1,A}}  P_\text{s} X\frac{c_{\text{2,A}} Y}{c_{\text{2,A}} Y+1}$ where $X$ and $Y$ are two random variables following the exponential distribution with mean $1$. Also, $c_{\text{1,A}}$ and $c_{\text{2,A}}$ are constants. Then, the CDF  of $Z_{\text{A}}$ is given by
\end{mylem}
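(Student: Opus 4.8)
The plan is to obtain the CDF by conditioning on $Y$ and then integrating it out. The key observation is that for a fixed value $Y=y>0$, the map $X\mapsto Z_{\text{A}}$ is linear, so conditionally $Z_{\text{A}}$ is just a scaled unit-mean exponential whose CDF is elementary. Integrating this conditional CDF against the density of $Y$ should collapse to a single integral, which I expect to recognize as a modified Bessel function.

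Concretely, for $z\ge 0$ I would write
\begin{equation}
F_{Z_{\text{A}}}(z)=\mathbb{P}\left(c_{\text{1,A}}P_\text{s}X\frac{c_{\text{2,A}}Y}{c_{\text{2,A}}Y+1}\le z\right).
\end{equation}
Conditioning on $Y=y$ and solving for $X$, the event becomes $X\le \frac{z(c_{\text{2,A}}y+1)}{c_{\text{1,A}}P_\text{s}c_{\text{2,A}}y}$, so by $\mathbb{P}(X\le t)=1-e^{-t}$ and integrating against $f_Y(y)=e^{-y}$,
\begin{equation}
F_{Z_{\text{A}}}(z)=1-\int_0^\infty \exp\left(-\frac{z(c_{\text{2,A}}y+1)}{c_{\text{1,A}}P_\text{s}c_{\text{2,A}}y}-y\right)\mathrm{d}y.
\end{equation}
Splitting the first term of the exponent as $\frac{z}{c_{\text{1,A}}P_\text{s}}+\frac{z}{c_{\text{1,A}}P_\text{s}c_{\text{2,A}}y}$ factors out $e^{-z/(c_{\text{1,A}}P_\text{s})}$ and leaves the integral $\int_0^\infty \exp(-a/y-y)\,\mathrm{d}y$ with $a=\frac{z}{c_{\text{1,A}}P_\text{s}c_{\text{2,A}}}$.

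The main obstacle, and really the only nonroutine step, is evaluating this remaining integral in closed form. Invoking the standard identity $\int_0^\infty \exp(-\beta/x-\gamma x)\,\mathrm{d}x=2\sqrt{\beta/\gamma}\,K_1\!\left(2\sqrt{\beta\gamma}\right)$ (here with $\gamma=1$), where $K_1$ denotes the order-one modified Bessel function of the second kind, yields
\begin{equation}
F_{Z_{\text{A}}}(z)=1-2e^{-\frac{z}{c_{\text{1,A}}P_\text{s}}}\sqrt{\frac{z}{c_{\text{1,A}}P_\text{s}c_{\text{2,A}}}}\,K_1\!\left(2\sqrt{\frac{z}{c_{\text{1,A}}P_\text{s}c_{\text{2,A}}}}\right),\quad z\ge 0,
\end{equation}
and $F_{Z_{\text{A}}}(z)=0$ for $z<0$. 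As a consistency check I would confirm $F_{Z_{\text{A}}}(0)=0$ using the small-argument limit $x K_1(x)\to 1$ as $x\to 0$, and verify $F_{Z_{\text{A}}}(z)\to 1$ as $z\to\infty$ from the exponential decay of $K_1$.
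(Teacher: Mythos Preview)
Your proof is correct and follows essentially the same route as the paper's: condition on $Y$, use the exponential CDF of $X$, factor out $e^{-z/(c_{\text{1,A}}P_\text{s})}$, and reduce the remaining integral to a modified Bessel function via the identity \cite[Eq.~3.324.1]{tabebesel}. The only cosmetic difference is that the paper first applies the substitution $w=\frac{c_{\text{1,A}}P_\text{s}c_{\text{2,A}}y}{t}$ before invoking the Bessel integral, whereas you apply the identity directly; the resulting expression is identical after noting $2\sqrt{a}=\sqrt{4a}$.
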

\begin{equation}\label{eq:CDF_main}
F_{Z_{\text{A}}}\left(t\right)=1-e^{\left(-\frac{t}{c_{\text{1,A}} P_\text{s}}\right)}\sqrt{\frac{4t}{ c_{\text{1,A}} P_\text{s} c_{\text{2,A}} }}\mathcal{K}_{1}\left(\sqrt{\frac{4t}{c_{\text{1,A}} P_\text{s} c_{\text{2,A}}}}\right),
\end{equation}
where $\mathcal{K}_1\left(\cdot\right)$ is the first-order modified Bessel function of the second kind.
\begin{proof}
According to the definition of CDF, we have
\begin{equation} \label{eq:cdf_no}
\begin{aligned}
&F_{Z_{\text{A}}}\left(t\right)
=\int\limits_{0}^{\infty}f_{Y}\left(y\right)\mathbb{P}\left[X\leq \frac{t\left(1+c_{\text{2,A}} y\right)}{c_{\text{1,A}} P_\text{s} c_{\text{2,A}} y}\right]\text{d}y\\
&=1-\int\limits_{0}^{\infty} e^{-y}e^{\left(-\frac{t\left(1+c_{\text{2,A}} y\right)}{c_{\text{1,A}} P_\text{s} c_{\text{2,A}}y}\right)}\text{d}y\\
&=1-e^{\left(-\frac{t}{ c_{\text{1,A}} P_\text{s}}\right)}\frac{t}{c_{\text{1,A}} P_\text{s} c_{\text{2,A}}}\int\limits_0^\infty e^{\left(-\frac{tw}{ c_{\text{2,A}} c_{\text{1,A}} P_\text{s}}-\frac{1}{w}\right)}\text{d}w.
\end{aligned}
\end{equation}
Here, the last equality is obtained using the change of variable $w=\frac{c_{\text{1,A}} P_\text{s} c_{\text{2,A}} y}{t}$. Then, by using the identity in \cite[Eq. 3.324.1]{tabebesel}, the final result in (\ref{eq:CDF_main}) is obtained.
\end{proof}
\begin{mylem} \label{thr:min}
\normalfont
For every $M_1,M_2,M_3>0,$ the probability $\mathcal{P}_{\text{out,A}}$ is 
approximated by $\tilde{\mathcal{P}}_{\text{out,A}}$ for $\text{A}\in\{\text{TSR,PSR}\}$ in (\ref{eq:aprrox_1}) as shown on the top of the next page. In (\ref{eq:aprrox_1}),
\newcounter{MYtempeqncnt}
\begin{figure*}[!t]
\normalsize
\setcounter{MYtempeqncnt}{\value{equation}}
\begin{equation} \label{eq:aprrox_1}
\small
\begin{aligned}
&\tilde{\mathcal{P}}_{\text{out,A}}=\\
& \sum\limits_{m=0}^{M_1}\sum\limits_{j=0}^{M_2} \frac{(-1)^{m}(-c_{\text{2,A}})^j\mathcal{U}_{m,j}\left[\mathcal{W}_j\left(\infty\right)-\mathcal{W}_j\left(-\lambda\right) \right]}{c^m_{\text{2,A}} m! j! \left(\sqrt{L_{\text{I,A}}}c_{\text{1,A}} c_{\text{2,A}} P_{\text{s}} e^{-R_{\text{A}}}\right)^{j+1}}+ 
e^{-\frac{1}{c_{\text{2,A}}}} \Bigg[ Q\left(-\sqrt{L_{\text{I,A}}}R_{\text{A}}\right) - e^{\frac{1}{c_{\text{1,A}} P_\text{s}}} \sum_{n=0}^{M_3}\frac{(-1)^n}{n!} \left(\frac{e^{R_{\text{A}}}}{c_{\text{1,A}} P_\text{s}}\right)^n  e^{\frac{n^2}{2L_{\text{I,A}}}} Q\left(\frac{-LR_{\text{A}}-n}{\sqrt{L_{\text{I,A}}}}\right) \Bigg]
 \Bigg]
\end{aligned}
\end{equation}
\setcounter{equation}{\value{MYtempeqncnt}+1}
\hrulefill
\vspace*{0pt}
\end{figure*}
\begin{equation}\label{eq:umj}
\mathcal{U}_{m,j}=\begin{cases}\vspace{0.3cm}
E_{m+1-j}\left(\frac{1-e^{-R_{\text{A}}}}{c_{\text{1,A}} P_{\text{s}} e^{-R_{\text{A}}}}\right) &\text{ $m+1>j$}\\ \vspace{0.3cm}
\frac{c_{\text{1,A}} P_{\text{s}} e^{-R_{\text{A}}}}{1-e^{-R_{\text{A}}}}e^{\left(-\frac{1-e^{-R_{\text{A}}}}{c_{\text{1,A}} P_{\text{s}} e^{-R_{\text{A}}}}\right)}&\text{ $m+1=j$}\\
\frac{\Gamma\left(m+2-j,\frac{1-e^{-R_{\text{A}}}}{c_{\text{1,A}} P_{\text{s}} e^{-R_{\text{A}}}}\right)}{\left(\frac{1-e^{-R_{\text{A}}}}{c_{\text{1,A}} P_{\text{s}} e^{-R_{\text{A}}}}\right)^{m+2-j}} &\text{ $m+1<j$}
\end{cases},
\end{equation}
and
\begin{equation}\label{eq:defw}
\small
\mathcal{W}_j\left(x\right)=\frac{x^{j+1}}{j+1}Q\left(x\right)+\frac{\left(\sqrt{2}\right)^{j-1}}{\sqrt{\pi}\left(j+1\right)}\gamma\left(\frac{j+2}{2},\frac{x^2}{2}\right),
\end{equation}
where $\gamma(s,a)$ and $\Gamma(s,a)$ denote the lower and the upper incomplete Gamma functions, respectively.
\end{mylem}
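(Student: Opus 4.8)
The plan is to evaluate the expectation in (\ref{eq:err_asli}) by first linearizing the argument of $Q(\cdot)$, then splitting the relay--destination fading average into a dominant tail that collapses to a one-dimensional Gaussian--$Q$ integral (which yields the second summand of (\ref{eq:aprrox_1})) and a complementary part that I expand into the double series (which yields the first summand). Throughout I use that $g_{\text{sr}},g_{\text{rd}}$ are independent unit-mean exponentials and that $\gamma_{\text{d,A}}=c_{\text{1,A}}P_\text{s}g_{\text{sr}}\frac{c_{\text{2,A}}g_{\text{rd}}}{c_{\text{2,A}}g_{\text{rd}}+1}$, cf. (\ref{eq:cdf_koli}). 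The linearization step replaces the dispersion factor $\sqrt{1-(1+\gamma_{\text{d,A}})^{-2}}$ by $1$, which is accurate in the moderate/high-SNR regime already assumed in (\ref{eq:snr_TSR}); the $Q$-argument then becomes $\sqrt{L_{\text{I,A}}}\bigl(\log(1+\gamma_{\text{d,A}})-R_{\text{A}}\bigr)$, an affine function of $\log(1+\gamma_{\text{d,A}})$. The substitution $x=\sqrt{L_{\text{I,A}}}(\log(1+\gamma)-R_{\text{A}})$, which sends $\gamma=0$ to $x=-\lambda$ and $\gamma=\infty$ to $x=\infty$, is what later produces the integrals $\int_{-\lambda}^{\infty}x^{j}Q(x)\,dx=\mathcal{W}_j(\infty)-\mathcal{W}_j(-\lambda)$, with $\mathcal{W}_j$ of (\ref{eq:defw}) recognized as the antiderivative of $x^{j}Q(x)$.

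For the tail term I condition on $g_{\text{rd}}$ and split at $c_{\text{2,A}}g_{\text{rd}}=1$. On $\{g_{\text{rd}}>1/c_{\text{2,A}}\}$, which has probability $e^{-1/c_{\text{2,A}}}$, I use $\frac{c_{\text{2,A}}g_{\text{rd}}}{c_{\text{2,A}}g_{\text{rd}}+1}\approx 1$, so $\gamma_{\text{d,A}}\approx c_{\text{1,A}}P_\text{s}g_{\text{sr}}$ and, by independence, the contribution factors as $e^{-1/c_{\text{2,A}}}\,\mathbb{E}_{g_{\text{sr}}}\!\left[Q(\cdot)\right]$. The remaining single-variable average I can evaluate exactly: integrating by parts (the boundary term at $\gamma=0$ gives $Q(-\sqrt{L_{\text{I,A}}}R_{\text{A}})$), substituting $u=\log(1+\gamma)$, expanding $e^{-e^{u}/(c_{\text{1,A}}P_\text{s})}=\sum_{n}\frac{(-1)^n}{n!}(c_{\text{1,A}}P_\text{s})^{-n}e^{nu}$, and completing the square in each resulting Gaussian integral produces precisely the bracketed expression of (\ref{eq:aprrox_1}), with the $e^{n^{2}/(2L_{\text{I,A}})}$ factors and the shifted arguments $Q\bigl((-LR_{\text{A}}-n)/\sqrt{L_{\text{I,A}}}\bigr)$ coming from the square-completion. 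Truncating the $n$-series at $M_3$ gives the finite sum.

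The complementary event $\{c_{\text{2,A}}g_{\text{rd}}<1\}$ supplies the double sum. There the geometric expansion $\frac{1}{1+c_{\text{2,A}}g_{\text{rd}}}=\sum_{j}(-c_{\text{2,A}}g_{\text{rd}})^{j}$ converges (this is the source of the $(-c_{\text{2,A}})^{j}$ factors), and a second Taylor expansion of the residual exponential supplies the $\frac{(-1)^m}{c_{\text{2,A}}^{m}m!}$ factors. Performing the $g_{\text{sr}}$ average then yields the coefficients $\mathcal{U}_{m,j}$ of (\ref{eq:umj}), whose three branches correspond to the sign of $m+1-j$ (the power of the fading variable that survives), producing respectively an exponential integral, a degenerate boundary term, and an upper incomplete Gamma function, all evaluated at the normalized outage threshold $\frac{1-e^{-R_{\text{A}}}}{c_{\text{1,A}}P_\text{s}e^{-R_{\text{A}}}}=\frac{e^{R_{\text{A}}}-1}{c_{\text{1,A}}P_\text{s}}$. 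The leftover $\int_{-\lambda}^{\infty}x^{j}Q(x)\,dx$ integrals close in terms of $\mathcal{W}_j$, and truncating the $m$- and $j$-series at $M_1,M_2$ completes the double sum.

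The hard part is not the bookkeeping but the two genuine obstructions to a closed form. The first, the dispersion factor $\sqrt{1-(1+\gamma)^{-2}}$, is removed by the high-SNR replacement above. The more serious one is that $\gamma_{\text{d,A}}$ is an irreducibly bivariate function of $(g_{\text{sr}},g_{\text{rd}})$ whose marginal density carries the Bessel factor $\mathcal{K}_1$ of (\ref{eq:CDF_main}), so no single substitution separates the variables and integrates against $Q(\cdot)$. The decisive idea is the threshold split at $c_{\text{2,A}}g_{\text{rd}}=1$: it isolates the heavy ``$\frac{c_{\text{2,A}}g_{\text{rd}}}{c_{\text{2,A}}g_{\text{rd}}+1}\to1$'' part as an exact one-dimensional integral and leaves a remainder whose geometric/Taylor series converge fast enough that small truncation orders suffice. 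I therefore expect the main technical point to be controlling the truncation error of the three series and confirming, as the later numerical results do, that modest $M_1,M_2,M_3$ already make $\tilde{\mathcal{P}}_{\text{out,A}}$ tight.
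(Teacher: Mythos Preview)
Your treatment of the tail piece on $\{c_{\text{2,A}}g_{\text{rd}}>1\}$ is correct and essentially coincides with the paper's $\mathcal{I}_2$: integration by parts, a logarithmic change of variable, Taylor expansion of the exponential kernel, and completion of the square yield exactly the bracketed term of (\ref{eq:aprrox_1}).

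The gap is in the complementary piece. Your claimed mechanism---expanding $\frac{1}{1+c_{\text{2,A}}g_{\text{rd}}}=\sum_j(-c_{\text{2,A}}g_{\text{rd}})^j$ and a ``residual exponential'' to obtain the $(m,j)$ double sum---does not work as stated, because the fraction sits \emph{inside} $Q\bigl(\sqrt{L_{\text{I,A}}}(\log(1+\cdot)-R_{\text{A}})\bigr)$ and a power series of the argument cannot be pulled through $Q\circ\log$. The paper gets around this with two further approximations you do not mention: (i) on $\{c_{\text{2,A}}g_{\text{rd}}<1\}$ it replaces $\frac{c_{\text{2,A}}g_{\text{rd}}}{1+c_{\text{2,A}}g_{\text{rd}}}$ by $\min\{c_{\text{2,A}}g_{\text{rd}},1\}=c_{\text{2,A}}g_{\text{rd}}$, so the SNR becomes the product $c_{\text{1,A}}c_{\text{2,A}}P_\text{s}\,g_{\text{sr}}g_{\text{rd}}$; and (ii) after the change of variable $w=g_{\text{sr}}g_{\text{rd}}$ it linearizes $\log(1+c_{\text{1,A}}c_{\text{2,A}}P_\text{s}w)-R_{\text{A}}$ by its tangent at the threshold $w=(e^{R_{\text{A}}}-1)/(c_{\text{1,A}}c_{\text{2,A}}P_\text{s})$, so the $Q$-argument becomes $\delta w-\lambda$. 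Only then do the $\int_{-\lambda}^{\infty}u^{j}Q(u)\,du$ integrals arise. The $m$-series comes from Taylor-expanding $e^{-w/x}$ after the $w=xy$ substitution (producing the generalized exponential integrals $E_{m+1}(c_{\text{2,A}}w)$), and the $j$-series comes from Taylor-expanding the kernel $e^{-c_{\text{2,A}}ut/\delta}$ after inserting the integral representation $E_{m+1}(z)=\int_1^\infty t^{-m-1}e^{-zt}\,dt$; the $(-c_{\text{2,A}})^{j}$ factors are therefore \emph{not} from a geometric series of the fraction. Likewise $\mathcal{U}_{m,j}=\int_1^\infty t^{j-m-1}e^{-\lambda c_{\text{2,A}}t/\delta}\,dt$ is the residual $t$-integral from that representation, not the $g_{\text{sr}}$ average. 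Without steps (i)--(ii) your route does not reach (\ref{eq:aprrox_1}); with them, the rest of the bookkeeping proceeds as in the paper.
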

\begin{proof}
At medium/high SNR, $\frac{\log(1+x)-r}{\sqrt{1-\frac{1}{(1+x)^2}}}$ is tightly approximated by  $\log(1+x)-r$. In this way, the error probability (\ref{eq:err_asli}) is given by
\begin{equation}\label{eq:lem2_main}
\small
\begin{aligned}
&\mathcal{P}_{\text{out,A}}
\simeq\mathbb{E}\left[Q\left(\sqrt{L_{\text{I,A}}}\left(\log\left(1+c_{\text{1,A}}P_{\text{s}}g_{\text{sr}}\frac{c_{\text{2,A}} g_{\text{rd}}}{1+c_{\text{2,A}} g_{\text{rd}}}\right)-R_{\text{A}}\right)\right)\right]\\
&\stackrel{\text{(a)}}{\simeq} \int\limits_{0}^{\infty}\int\limits_{0}^{\infty} Q\left(\sqrt{L_{\text{I,A}}}\left(\log\left(1+c_{\text{1,A}} P_\text{s} x \min\{1,c_{\text{2,A}} y\}\right)-R_{\text{A}}\right)\right)\\
& \hspace{5cm}\times
e^{-x}e^{-y} \text{d}x\text{d}y\\
&=\underbrace{\int\limits_{0}^{\frac{1}{c_{\text{2,A}}}} \int\limits_{0}^{\infty} Q\left(\sqrt{L_{\text{I,A}}}\left(\log\left(1+c_1 P_\text{s} c_{\text{2,A}} xy\right)-R_{\text{A}}\right)\right)e^{-x}e^{-y}\text{d}x\text{d}y}_{\mathcal{I}_1}
\\
&+\underbrace{ \int\limits_{\frac{1}{c_{\text{2,A}}}}^{\infty} \int\limits_{0}^{\infty} Q\left(\sqrt{L_{\text{I,A}}}\left(\log\left(1+c_1 P_\text{s} x\right)-R_{\text{A}}\right)\right)e^{-x}e^{-y}\text{d}x\text{d}y}_{\mathcal{I}_2}.
\end{aligned}
\end{equation}
Here, in $(a)$, we use $\frac{c_{\text{2,A}} g_{\text{rd}}}{1+c_{\text{2,A}} g_{\text{rd}}} \leq \min\{c_{\text{2,A}} g_{\text{rd}},1\}$ to approximate $\frac{c_{\text{2,A}} g_{\text{rd}}}{1+c_{\text{2,A}} g_{\text{rd}}}$. To find $\mathcal{I}_1$, we use the change of variable $w = xy$ to write 
\begin{equation}\nonumber
\small
\begin{aligned}
&\mathcal{I}_1=
 \int\limits_{0}^{\infty} \int\limits_{c_{\text{2,A}}w}^{\infty} Q(\sqrt{L_{\text{I,A}}} \left(\log\left(1+c_{\text{1,A} }c_{\text{2,A}} P_{\text{s}} w\right)-R_{\text{A}}\right))e^{-x}e^{-\frac{w}{x}} \frac{1}{x}\text{d}x\text{d}w\\
&\stackrel{\text{(b)}}{=}  \sum\limits_{m=0}^{\infty} \frac{(-1)^m}{m!}\int\limits_{0}^{\infty}Q\left(\sqrt{L_{\text{I,A}}} \left(\log\left(1+c_{\text{1,A}} c_{\text{2,A}} P_{\text{s}} w\right)-R_{\text{A}}\right)\right)\\
& \hspace{5cm} \times \int\limits_{c_{\text{2,A}} w}^{\infty} e^{-x} \left(\frac{w}{x}\right)^m \frac{1}{x}\text{d}x  \text{d}w
 \end{aligned}
\end{equation}
\begin{equation}\label{eq:I1_main}
\small
\begin{aligned}
&\stackrel{\text{(c)}}{=} \sum\limits_{m=0}^{\infty}\frac{(-1)^m}{c^m_{\text{2,A}} m!}\int\limits_{0}^{\infty}Q\left(\sqrt{L_{\text{I,A}}} \left(\log\left(1+c_{\text{1,A}} c_{\text{2,A}} P_{\text{s}} w\right)-R_{\text{A}}\right)\right)\\
&\hspace{5cm} \times E_{m+1}\left(c_{\text{2,A}} w\right) \text{d}w\\
&\stackrel{\text{(d)}}{\simeq}  \sum\limits_{m=0}^{\infty} \frac{(-1)^m}{c^m_{\text{2,A}} m!} \int\limits_{0}^{\infty} Q\left(\delta w- \lambda\right)  E_{m+1}\left(c_{\text{2,A}} w\right) \text{d}w\\
&\stackrel{\text{(e)}}{=}\sum\limits_{m=0}^{\infty} \frac{(-1)^m}{c^m_{\text{2,A}} m!\delta}\int\limits_{1}^{\infty} \frac{e^{-\frac{c_{\text{2,A}}\lambda t}{\delta}}}{t^{m+1}}\left[ \int\limits_{-\lambda}^{\infty}  Q\left(u\right) e^{- c_{\text{2,A}} \frac{u}{\delta}t} \text{d}u\right]\text{d}t\\
&=\sum\limits_{m=0}^{\infty}\sum\limits_{j=0}^{\infty} \frac{(-1)^{m}(-c_{\text{2,A}})^j}{c_{\text{2,A}}^m m! j! \delta^{j+1}}\underbrace{\int\limits_{1}^{\infty} \frac{e^{-\frac{\lambda c_{\text{2,A}} t}{ \delta}}}{t^{m+1}}t^j\text{d}t}_{\mathcal{U}_{m,j}}\underbrace{\int\limits_{-\lambda}^{\infty}Q\left(u\right) u^j\text{d}u}_{\mathcal{W}_j\left(\infty\right)-\mathcal{W}_j\left(-\lambda\right)} \\
&=\sum\limits_{m=0}^{\infty}\sum\limits_{j=0}^{\infty} \frac{(-1)^{m}(-c_{\text{2,A}})^j}{c_{\text{2,A}}^m m! j! \delta^{j+1}}\mathcal{U}_{m,j}\left[\mathcal{W}_j\left(\infty\right)-\mathcal{W}_j\left(-\lambda\right)\right],
\end{aligned}
\end{equation}
with $\mathcal{U}_{m,j}$ and $\mathcal{W}_j\left(x\right)$ defined in (\ref{eq:umj}) and 
(\ref{eq:defw}), respectively. Here, $\delta\triangleq \sqrt{L_{\text{I,A}}}c_{\text{1,A}} c_{\text{2,A}} P_{\text{s}} e^{-R_{\text{A}}}$ and $\lambda\triangleq\sqrt{L_{\text{I,A}}}\left(1-e^{-R_{\text{A}}}\right)$. Then, in $(b)$, we have used the Taylor series representation of $e^{-\frac{w}{v}}$. Also, $(c)$ is found by the definition of generalized exponential integral in \cite[Eq. 8.211.1]{tabebesel}, $E_{m}\left(x\right)=\int_{1}^{\infty}\frac{e^{-xt}}{t^m}\text{d}t.$ Also, considering the fact that $Q(x)$ is a decreasing function and $\log(1+c_{\text{1,A}}c_{\text{2,A}}Pw)-R_{\text{A}}$ is a concave function in $w$, in step $(d)$, $\sqrt{L_{\text{I,A}}}\left(\log(1+c_{\text{1,A}} c_{\text{2,A}}P_{\text{s}}w)-R_{\text{A}}\right)$ is approximated by its first order Taylor expansion at $w=\frac{e^{R_{\text{A}}}-1}{c_{\text{1,A}} c_{\text{2,A}} P_{\text{s}}}$. Then, $(e)$ follows from the change of variable $u = \delta w- \lambda$ and using the integral representation of $E_{m+1}\left(\cdot\right)$ given in  \cite[Eq. 8.211.1]{tabebesel}. Finally, the last equality is based on \cite[Eq. 4.1.17]{ng1969table} and some manipulations.
\par Then, $\mathcal{I}_2$ in (\ref{eq:lem2_main}) is given by
\begin{equation}\label{eq:I2}
\small
\begin{aligned}
\mathcal{I}_2&=e^{-\frac{1}{c_{\text{2,A}}}}\int\limits_{0}^{\infty}Q\left(\sqrt{L_{\text{I,A}}}\left(\log\left(1+P_\text{s} c_{\text{1,A}} x\right)-R_{\text{A}}\right)\right)e^{-x}\text{d}x\\
&\stackrel{\text{(f)}}{=} \frac{e^{R_{\text{A}}}}{c_{\text{1,A}} P_\text{s}} e^{-\frac{1}{c_{\text{2,A}}}}e^{\frac{1}{c_{\text{1,A}} P_\text{s}}}\int\limits_{e^{-R_{\text{A}}}}^{\infty}Q\left(\sqrt{L_{\text{I,A}}}\log\left(u\right)\right)e^{-\frac{ue^{R_{\text{A}}}}{c_{\text{1,A}} P_\text{s}}}\text{d}x\\
&\stackrel{\text{(g)}}{=}  e^{-\frac{1}{c_{\text{2,A}}}} \bigg[ Q\left(-\sqrt{L_{\text{I,A}}}R_{\text{A}}\right) - \sqrt{\frac{L_{\text{I,A}}}{2\pi}}e^{\frac{1}{c_{\text{1,A}} P_\text{s}}} \sum_{n=0}^{\infty}\frac{(-1)^n}{n!} \\ 
& \hspace{2cm} \times \left(\frac{e^{R_{\text{A}}}}{c_{\text{1,A}} P_\text{s}}\right)^n \int\limits_{e^{-R_{\text{A}}}}^{\infty} e^{-\frac{L_{\text{I,A}}\left(\log\left(u\right)\right)^2}{2}}  u^{n-1}   \text{d}u  \bigg]\\
&=
 e^{-\frac{1}{c_{\text{2,A}}}} \Bigg[ Q\left(-\sqrt{L_{\text{I,A}}}R_{\text{A}}\right) - e^{\frac{1}{c_{\text{1,A}} P_\text{s}}} \sum_{n=0}^{\infty}\frac{(-1)^n}{n!} \left(\frac{e^{R_{\text{A}}}}{c_{\text{1,A}} P_\text{s}}\right)^n \\
& \hspace{4cm}  \times e^{\frac{n^2}{2L_{\text{I,A}}}} Q\left(\frac{-L_{\text{I,A}}R_{\text{A}}-n}{\sqrt{L_{\text{I,A}}}}\right) \Bigg],
\end{aligned}
\end{equation}
where, $(f)$ is obtained by using the change of variable $u = \left(1+P_{\text{s}}c_{\text{1,A}}x\right)e^{-R_{\text{A}}}$. Also, $(g)$ follows from partial integration and the Taylor series expansion of $e^{-\frac{ue^{R_{\text{A}}}}{c_{\text{1,A}} P_\text{s}}}$.\par
Finally, substituting (\ref{eq:I1_main}) and (\ref{eq:I2}) into (\ref{eq:lem2_main}) and considering the finite number of summations, the expression in (\ref{eq:aprrox_1}) is obtained where $M_1$, $M_2$ and $M_3$ denote the number of summation terms in (\ref{eq:I1_main}) and (\ref{eq:I2}), respectively. Note that the tightness of the approximations increases by $M_1,M_2,M_3.$ However, as also seen in Figs. 1-3, the approximation result of (\ref{eq:aprrox_1}) matches tightly with the numerical results even by $M_1=2, M_2=2, M_3=1.$ Moreover, as will be shown in Figs. 1 and 3, the approximation in (\ref{eq:aprrox_1}) is tight in the medium to high SNR regime, while its tightness decreases at low SNRs comparatively.
\end{proof}
\begin{mylem}\label{lem:beh}
\normalfont
The error probability (\ref{eq:err_asli}) is approximated by
\end{mylem}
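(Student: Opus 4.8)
The plan is to bypass the Taylor/integral expansion used in Lemma \ref{thr:min} and instead invoke the piecewise-linear approximation of the $Q$-function (the technique of \cite{behmain}), which converts the expectation in (\ref{eq:err_asli}) directly into an integral of the CDF established in (\ref{eq:CDF_main}). As a first step I would reuse the medium/high-SNR simplification already justified in the proof of Lemma \ref{thr:min}, namely $\sqrt{1-(1+\gamma)^{-2}}\approx 1$, so that
\[
\mathcal{P}_{\text{out,A}}\simeq\mathbb{E}\!\left[Q\!\left(\sqrt{L_{\text{I,A}}}\bigl(\log(1+\gamma_{\text{d,A}})-R_{\text{A}}\bigr)\right)\right],
\]
with the expectation taken over $\gamma_{\text{d,A}}=Z_{\text{A}}$, whose CDF is (\ref{eq:CDF_main}).

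Next I would replace the composite map $\gamma\mapsto Q\bigl(\sqrt{L_{\text{I,A}}}(\log(1+\gamma)-R_{\text{A}})\bigr)$ by the piecewise-linear surrogate obtained from the first-order expansion of $Q$ about its midpoint. Since $Q(0)=\tfrac12$ and $Q'(0)=-1/\sqrt{2\pi}$, I define
\[
\Omega(\gamma)=\begin{cases}
1,&\log(1+\gamma)\le R_{\text{A}}-\sqrt{\tfrac{\pi}{2L_{\text{I,A}}}},\\
\tfrac12-\tfrac{\sqrt{L_{\text{I,A}}}}{\sqrt{2\pi}}\bigl(\log(1+\gamma)-R_{\text{A}}\bigr),&\text{in the transition window},\\
0,&\log(1+\gamma)\ge R_{\text{A}}+\sqrt{\tfrac{\pi}{2L_{\text{I,A}}}},
\end{cases}
\]
so that the breakpoints $\gamma_1=e^{R_{\text{A}}-\sqrt{\pi/(2L_{\text{I,A}})}}-1$ and $\gamma_2=e^{R_{\text{A}}+\sqrt{\pi/(2L_{\text{I,A}})}}-1$ are precisely the values at which the linear piece reaches $1$ and $0$, respectively.

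I would then evaluate $\mathbb{E}[\Omega(\gamma_{\text{d,A}})]=\int_0^\infty \Omega(t)\,\mathrm{d}F_{Z_{\text{A}}}(t)$ by integration by parts. The flat piece on $[0,\gamma_1]$ contributes $F_{Z_{\text{A}}}(\gamma_1)$; the boundary terms at $\gamma_1,\gamma_2$ cancel this contribution because $\Omega(\gamma_1)=1$ and $\Omega(\gamma_2)=0$; and only the derivative of the linear piece, $\Omega'(t)=-\tfrac{\sqrt{L_{\text{I,A}}}}{\sqrt{2\pi}}\,\tfrac{1}{1+t}$, survives, giving
\[
\mathcal{P}_{\text{out,A}}\simeq\frac{\sqrt{L_{\text{I,A}}}}{\sqrt{2\pi}}\int_{\gamma_1}^{\gamma_2}\frac{F_{Z_{\text{A}}}(t)}{1+t}\,\mathrm{d}t,
\]
with $F_{Z_{\text{A}}}$ taken from (\ref{eq:CDF_main}); the factor $1/(1+t)$ is the Jacobian from differentiating $\log(1+t)$.

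The main obstacle is to argue that the surrogate is tight. Its accuracy rests on the transition window $[\gamma_1,\gamma_2]$ contracting like $\sqrt{\pi/(2L_{\text{I,A}})}$ in the log-domain, so that over this window $Q$ is indistinguishable from its tangent at the midpoint and the two clipping errors (replacing $Q$ by $1$ slightly too early and by $0$ slightly too late) have opposite sign and largely cancel; I would make this quantitative by bounding the residual $\int |Q-\Omega|\,\mathrm{d}F_{Z_{\text{A}}}$ and checking that it is lower-order relative to the leading term, which is also exactly where the medium/high-SNR regime enters. A secondary point is that the integral of the Bessel-function CDF over the finite window does not reduce to elementary functions; I would either keep it as a one-dimensional integral amenable to fast numerical evaluation, or expand $F_{Z_{\text{A}}}$ (equivalently $\mathcal{K}_1$) in a short truncated series on $[\gamma_1,\gamma_2]$ to obtain an explicit expression, mirroring the finite-truncation strategy of Lemma \ref{thr:min}.
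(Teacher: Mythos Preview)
Your overall strategy---replace the $Q$-function by a piecewise-linear surrogate, integrate by parts to convert the expectation into an integral of the CDF (\ref{eq:CDF_main}), and then truncate a series---is exactly the one the paper uses. However, two concrete choices in your linearization differ from the paper's and prevent you from reaching the specific expression (\ref{eq:beh_out}).

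First, the paper does \emph{not} begin by discarding the channel-dispersion factor $\sqrt{1-(1+\gamma)^{-2}}$. Instead it linearizes the full map $x\mapsto Q\!\left(\frac{\sqrt{L_{\text{I,A}}}(\log(1+x)-R_{\text{A}})}{\sqrt{1-(1+x)^{-2}}}\right)$ directly at the zero-crossing $x=\alpha=e^{R_{\text{A}}}-1$, so the dispersion term is absorbed into the slope $\beta=\sqrt{1/(2\pi(e^{2R_{\text{A}}}-1))}$. Second, and more importantly, the paper linearizes in the \emph{SNR variable $x$}, not in $\log(1+x)$. The surrogate is therefore affine in $x$ with constant derivative $-\beta\sqrt{L_{\text{I,A}}}$, and integration by parts yields simply $\beta\sqrt{L_{\text{I,A}}}\int_{\alpha-1/(2\beta\sqrt{L_{\text{I,A}}})}^{\alpha+1/(2\beta\sqrt{L_{\text{I,A}}})}F_{Z_{\text{A}}}(t)\,\mathrm{d}t$, i.e.\ no $1/(1+t)$ Jacobian and breakpoints symmetric about $\alpha$ in the $x$-domain. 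It is precisely this form that, after the substitution $u=\sqrt{4t/(c_{\text{1,A}}P_{\text{s}}c_{\text{2,A}})}$ and a Taylor expansion of $e^{-c_{\text{2,A}}u^2/4}$, reduces to the integrals $\int u^{2n+2}\mathcal{K}_1(u)\,\mathrm{d}u$ and hence to the Meijer $G$-function expression in (\ref{eq:beh_out}).

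Your version, by linearizing in the log-domain, produces the extra factor $1/(1+t)$ in the integrand and breakpoints $e^{R_{\text{A}}\pm\sqrt{\pi/(2L_{\text{I,A}})}}-1$; the resulting integral does not collapse to (\ref{eq:beh_out}) under the same substitution/expansion, so as written your argument yields a legitimate but \emph{different} approximation, not the one stated in the lemma. To match the paper, linearize in $x$ at $x=\alpha$ (retaining the dispersion term in the slope), which removes the Jacobian and gives the symmetric window $\alpha\pm 1/(2\beta\sqrt{L_{\text{I,A}}})$ that appears in (\ref{eq:beh_out}).
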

\begin{equation}\label{eq:beh_out}
\small
\begin{aligned}
&\mathcal{P}_{\text{out,A}}\simeq1-\frac{\beta\sqrt{L_{\text{I,A}}}  P_{\text{s}} c_{\text{1,A}} c_{\text{2,A}} }{2}
\sum_{n=0}^{M_4}\frac{(-c_{\text{2,A}})^n}{4^n n!}\\
&  \times\Bigg[ \mathcal{Y}_n\left(\sqrt{\frac{4\left(\alpha+\frac{1}{2\beta \sqrt{L_{\text{I,A}}}}\right)}{  P_{\text{s}} c_{\text{1,A}} c_{\text{2,A}}}}\right)-\mathcal{Y}_n\left(\sqrt{\frac{4\left(\alpha-\frac{1}{2 \beta  \sqrt{L_{\text{I,A}}}}\right)}{  P_{\text{s}} c_{\text{1,A}} c_{\text{2,A}}}}\right)\Bigg],
\end{aligned}
\end{equation}
with
\begin{equation*}
\mathcal{Y}_n\left(x\right)
=2^{2n+1} \mathcal{G}_{1,3}^{2,1}\left(\frac{x}{2},\frac{1}{2}\Bigg| 
  \begin{array}{ll}
     \ \ \ \ \ \  \ \  1   \\
    n+1,n+2,0 
  \end{array}\right).
\end{equation*}
Here, we define $\alpha=e^{R_{\text{A}}}-1$, $\beta=\sqrt{\frac{1}{2\pi\left(e^{
2R_{\text{A}}}-1\right)}}$ and $M_4\ge0$ is an arbitrary number which is selected such that (\ref{eq:beh_out}) provides an appropriate approximation for the outage probability. Also, in (\ref{eq:beh_out}), $\mathcal{G}_{a_1,a_2}^{b_1,b_2}\left(\cdot,\cdot\Bigg| 
  \begin{array}{ll}
     \ \  \cdot   \\
    \cdot,\cdot,\cdot 
  \end{array}\right)$ is the generalized Meijer G-function \cite[Eq. 9.301]{tabebesel}. 
\begin{proof}
In order to prove Lemma \ref{lem:beh}, we use a linear approximation $Q\left(\frac{\sqrt{L_{\text{I,A}}}\left(\log \left(1+x\right)-R_{\text{A}}\right)}{\sqrt{1-\frac{1}{\left(1+x\right)^2}}}\right)$ $\simeq$ $\mathcal{J}\left(x\right)$ with
\begin{equation} \label{eq:app_beh}
\small
\begin{aligned}
\mathcal{J}\left(x\right)=\begin{cases}\vspace{0.15cm}
1 &\text{ $x\leq \alpha-\frac{1}{2\sqrt{L_{\text{I,A}}}\beta}$}\\ 
\frac{1}{2}-\beta\sqrt{L_{\text{I,A}}}\left(x-\alpha\right)&\text{ $\alpha-\frac{1}{2\sqrt{L_{\text{I,A}}}\beta}\leq x \leq \alpha+\frac{1}{2\sqrt{L_{\text{I,A}}}\beta} $}\\
0 &\text{ $x\geq \alpha+\frac{1}{2\sqrt{L_{\text{I,A}}}\beta}$}
\end{cases},
\end{aligned}
\end{equation}
where $\alpha=e^{R_{\text{A}}}-1$ and $\beta=\frac{\partial\left(  Q\left(\frac{\sqrt{L_{\text{I,A}}}\left(\log \left(1+x\right)-R_{\text{A}}\right)}{\sqrt{1-\frac{1}{\left(1+x\right)^2}}}\right)\right)}{L_{\text{I,A}} \partial x}=\sqrt{\frac{1}{2\pi\left(e^{2R_{\text{A}}}-1\right)}}$. Hence,  (\ref{eq:err_asli}) is approximated by
\begin{equation}\nonumber
\small
\begin{aligned}
&\tilde{\mathcal{P}}_{\text{out,A}}= \int\limits_{0}^{\infty} \mathcal{J}\left(x\right)f_{\gamma_{\text{d}}}\left(x\right)\text{d}x\stackrel{\text{(h)}}{=}\beta\sqrt{L_{\text{I,A}}}\int\limits_{\alpha-\frac{1}{2\beta\sqrt{L_{\text{I,A}}}}}^{\alpha+\frac{1}{2\beta\sqrt{L_{\text{I,A}}}}}F_{\gamma_{\text{d}}}\left(x\right)\text{d}x
\end{aligned}
\end{equation}
\begin{equation}\label{eq:prob_beh}
\small
\begin{aligned}
&\stackrel{\text{(i)}}{=}1-\frac{\beta\sqrt{L_{\text{I,A}}}  P_{\text{s}} c_{\text{1,A}} c_{\text{2,A}} }{2}\int\limits_{\sqrt{\frac{4\left(\alpha-\frac{1}{2\beta\sqrt{L_{\text{I,A}}}}\right)}{ P_{\text{s}} c_{\text{1,A}} c_{\text{2,A}}}}}^{\sqrt{\frac{4\left(\alpha+\frac{1}{2\beta\sqrt{L_{\text{I,A}}}}\right)}{ P_{\text{s}} c_{\text{1,A}} c_{\text{2,A}}}}}
e^{\left(-\frac{c_{\text{2,A}}}{4} u^2\right)}u^2\mathcal{K}_{1}\left(u\right)\text{d}u\\
&=1-\frac{\beta \sqrt{L_{\text{I,A}}} P_{\text{s}} c_{\text{1,A}} c_{\text{2,A}} }{2}
\sum_{n=0}^{\infty}\frac{(-c_{\text{2,A}})^n}{4^n n!}\\
&\hspace{3cm} \times \int\limits_{\sqrt{\frac{4\left(\alpha-\frac{1}{2\beta\sqrt{L_{\text{I,A}}}}\right)}{ P_{\text{s}} c_{\text{1,A}} c_{\text{2,A}}}}}^{\sqrt{\frac{4\left(\alpha+\frac{1}{2\beta\sqrt{L_{\text{I,A}}}}\right)}{ P_{\text{s}} c_{\text{1,A}} c_{\text{2,A}}}}} u^{2n+2}\mathcal{K}_{1}\left(u\right)\text{d}u,
\end{aligned}
\end{equation}
where $(h)$ is obtained by partial integration and some manipulations. Then, $(i)$ is found by using (\ref{eq:CDF_main}) and the change of variable $u=\sqrt{\frac{4t}{c_{\text{1,A}} P_\text{s} c_{\text{2,A}}}}$. Also the last equality comes from the Taylor series expansion of $e^{\left(-\frac{c_{\text{2,A}} }{4}u^2\right)}$. Finally, using the representation of $\mathcal{K}_1\left(x\right)$ function in terms of Meijer G-function  \cite[Eq. 9.34.3]{tabebesel}
 $\mathcal{K}_{1}\left(x\right)=\frac{1}{2}\mathcal{G}_{0,2}^{2,0}\left(\frac{x}{2},\frac{1}{2}\Bigg|      
    \frac{n}{2},\frac{-n}{2} 
 \right)$
and considering the finite number of summations which is denoted by $M_4$ the expression in (\ref{eq:beh_out}) is obtained. 
\end{proof}
As will be shown in Section \ref{sec:nemo}, for a broad range of parameter settings, very accurate approximation of the outage probability is obtained by few summations terms in (\ref{eq:beh_out}). For instance, with the parameter settings of Figs. 1-3, very tight approximations of the outage probability are achieved by setting $M_4=2$ in (28).  
\begin{mylem}
\normalfont
At high SNRs, the outage probability (\ref{eq:err_asli}) is approximated by 
\begin{equation} \label{eq:aprrox_hsnr}
\begin{aligned}
&\lim\limits_{P_{\text{s}}\to \infty}\mathcal{P}_{\text{out,A}}\simeq \frac{\alpha\left(\log\left(4P_{\text{s}}\right)+c_{\text{2,A}}-\frac{1}{2}\right)}{c_{\text{1,A}} c_{\text{2,A}} P_{\text{s}}}-\frac{\beta }{2 c_{\text{1,A}} c_{\text{2,A}} P_{\text{s}}}\\
&\hspace{4cm}\times\underbrace{ \sqrt{L_{\text{I,A}}}\left[V\left(T_1\right)-V\left(T_2\right) \right]}_{U\left(L_{\text{I,A}}\right)},
\end{aligned}
\end{equation}
where $V\left(x\right)\triangleq x^2\log\left(\frac{4x}{c_1c_{\text{2,A}}}\right) $, $T_{1}=\alpha+\frac{1}{2\beta\sqrt{L_{\text{I,A}}}}$ and $T_{2}=\alpha-\frac{1}{2\beta\sqrt{L_{\text{I,A}}}}$.
\end{mylem}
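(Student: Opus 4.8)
The plan is to start from the intermediate identity produced by step $(h)$ in the proof of Lemma~\ref{lem:beh}, namely
\begin{equation}\nonumber
\tilde{\mathcal{P}}_{\text{out,A}}=\beta\sqrt{L_{\text{I,A}}}\int\limits_{T_2}^{T_1}F_{Z_{\text{A}}}\left(t\right)\text{d}t,
\end{equation}
with $T_1=\alpha+\frac{1}{2\beta\sqrt{L_{\text{I,A}}}}$, $T_2=\alpha-\frac{1}{2\beta\sqrt{L_{\text{I,A}}}}$, and $F_{Z_{\text{A}}}$ the CDF in (\ref{eq:CDF_main}). Because the integration window $[T_2,T_1]$ is fixed while $P_{\text{s}}\to\infty$, the Bessel argument $s\triangleq\sqrt{\frac{4t}{c_{\text{1,A}}P_{\text{s}}c_{\text{2,A}}}}$ tends to $0$ uniformly over the window, so the whole computation reduces to a small-argument expansion of $F_{Z_{\text{A}}}$ organised by powers of $1/P_{\text{s}}$.

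Next I would expand each factor of $F_{Z_{\text{A}}}\left(t\right)=1-e^{-\frac{t}{c_{\text{1,A}}P_{\text{s}}}}\,s\mathcal{K}_1\left(s\right)$ for small $s$. Using $\mathcal{K}_1\left(s\right)=\frac{1}{s}+\frac{s}{2}\log\frac{s}{2}+\mathcal{O}\left(s\right)$ gives $s\mathcal{K}_1\left(s\right)=1+\frac{s^2}{2}\log\frac{s}{2}+\mathcal{O}\left(s^2\right)$, while $e^{-\frac{t}{c_{\text{1,A}}P_{\text{s}}}}=1-\frac{t}{c_{\text{1,A}}P_{\text{s}}}+\mathcal{O}\left(P_{\text{s}}^{-2}\right)$. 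The constant ($1$) terms then cancel in $F_{Z_{\text{A}}}$, which is exactly the statement that the outage vanishes at high SNR, so the result lives entirely at the next order. Substituting $s^2=\frac{4t}{c_{\text{1,A}}P_{\text{s}}c_{\text{2,A}}}$ and $\log\frac{s}{2}=\frac{1}{2}\log\frac{t}{c_{\text{1,A}}c_{\text{2,A}}P_{\text{s}}}$, the surviving leading contribution is
\begin{equation}\nonumber
F_{Z_{\text{A}}}\left(t\right)\simeq-\frac{t}{c_{\text{1,A}}c_{\text{2,A}}P_{\text{s}}}\log\frac{t}{c_{\text{1,A}}c_{\text{2,A}}P_{\text{s}}}+\frac{t}{c_{\text{1,A}}P_{\text{s}}}.
\end{equation}

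I would then insert this into the integral and evaluate the two elementary pieces $\int_{T_2}^{T_1}t\,\text{d}t=\frac{T_1^2-T_2^2}{2}$ and $\int_{T_2}^{T_1}t\log\frac{t}{c_{\text{1,A}}c_{\text{2,A}}P_{\text{s}}}\text{d}t$ by parts, using $\int t\log t\,\text{d}t=\frac{t^2}{2}\log t-\frac{t^2}{4}$. The algebra collapses with $T_1-T_2=\frac{1}{\beta\sqrt{L_{\text{I,A}}}}$, $T_1+T_2=2\alpha$, hence $T_1^2-T_2^2=\frac{2\alpha}{\beta\sqrt{L_{\text{I,A}}}}$, together with the key rewriting $\log\frac{T_i}{c_{\text{1,A}}c_{\text{2,A}}P_{\text{s}}}=\frac{V\left(T_i\right)}{T_i^2}-\log\left(4P_{\text{s}}\right)$, which is precisely the mechanism by which $V\left(x\right)=x^2\log\frac{4x}{c_{\text{1,A}}c_{\text{2,A}}}$ and the term $\log\left(4P_{\text{s}}\right)$ enter. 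Multiplying back by $\beta\sqrt{L_{\text{I,A}}}$, the $\log\left(4P_{\text{s}}\right)$ pieces and the residual constants assemble into the first fraction of (\ref{eq:aprrox_hsnr}), while the $V\left(T_1\right)-V\left(T_2\right)$ pieces assemble into the second, yielding $U\left(L_{\text{I,A}}\right)$.

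The main obstacle is the bookkeeping at subleading order. Since the $\mathcal{O}\left(1\right)$ terms cancel, the answer is sensitive to exactly which terms are kept in the Bessel expansion: the logarithmic term $\frac{s}{2}\log\frac{s}{2}$ is indispensable, as it simultaneously produces the $\frac{\log(\text{SNR})}{\text{SNR}}$ decay and the function $V$, whereas the $\mathcal{O}\left(s^2\right)$ constant parts of $s\mathcal{K}_1\left(s\right)$ and of the exponential feed only the fixed constants that accompany $c_{\text{2,A}}$ in the first fraction. Getting those constants right, along with the coefficient arising from the $-\frac{t^2}{4}$ piece of the logarithmic integral, is the delicate step; everything else is routine integration and simplification.
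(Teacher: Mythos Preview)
Your proposal is correct and follows essentially the same route as the paper: both start from the linearization window of Lemma~\ref{lem:beh}, apply the small-argument expansion $s\mathcal{K}_1(s)\simeq 1+\tfrac{s^2}{2}\log\tfrac{s}{2}$ together with a first-order Taylor expansion of the exponential factor, and then integrate the resulting polynomial-logarithmic pieces. The only cosmetic difference is that you work directly in the variable $t$ from step~$(h)$, whereas the paper passes through the change of variable of step~$(i)$ before expanding; your explicit identification of $\log\tfrac{T_i}{c_{\text{1,A}}c_{\text{2,A}}P_{\text{s}}}=\tfrac{V(T_i)}{T_i^2}-\log(4P_{\text{s}})$ is a nice way to see how $V$ and the $\log(4P_{\text{s}})$ term emerge, which the paper absorbs into ``some manipulations.''
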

\begin{proof}
We use the approximation approach presented in Lemma \ref{lem:beh}, to find the high SNR approximation. In particular, considering step $(i)$ in (\ref{eq:prob_beh}), we have
\begin{equation}
\small
\begin{aligned}
&\mathcal{P}_{\text{out}}\simeq 1-\frac{\beta \sqrt{L_{\text{I,A}}}  P_{\text{s}} c_{\text{1,A}} c_{\text{2,A}} }{2}\int\limits_{\sqrt{\frac{4T_2}{ P_{\text{s}} c_{\text{1,A}} c_{\text{2,A}}}}}^{\sqrt{\frac{4T_1}{ P_{\text{s}} c_{\text{1,A}} c_{\text{2,A}}}}}
e^{-\frac{c_{\text{2,A}}}{4} u^2}u^2\mathcal{K}_{1}\left(u\right)\text{d}u\\
&\simeq 1-\frac{\beta \sqrt{L_{\text{I,A}}}   c_{\text{1,A}} c_{\text{2,A}} }{4} \int\limits_{\frac{4 T_2}{ c_{\text{1,A}} c_{\text{2,A}}}}^{\frac{4T_1}{  c_{\text{1,A}} c_{\text{2,A}}}} \left(1-\frac{c_{\text{2,A}} v}{4P_{\text{s}}}+\frac{v}{4P_{\text{s}}}\log\left(\frac{v}{4P_{\text{s}}}\right)\right)\text{d}v.\\
\end{aligned}
\end{equation}
Here, we use the tight of approximation $t\mathcal{K}_1(t)\simeq 1+\frac{t^2}{2}\log\left(\frac{t}{2}\right)$ for $t\to 0$\cite[Eq. 24]{poor} and the first order Taylor series expansion of $e^{-\frac{c_{\text{2,A}} v}{4P_{\text{s}}}}$. Finally, Considering high-order terms and with some manipulations, the expression in (\ref{eq:aprrox_hsnr}) is obtained.
\end{proof}
Equation (\ref{eq:aprrox_hsnr}) implies that the outage probability decreases linearly with $\frac{\log P}{P}$ at high SNR. This is substantially worse than fixed-energy supply relay networks, where the outage probability decreases linearly with $\frac{1}{P}$\cite{rel_snr}. Moreover, $U\left(L_{\text{I,A}}\right)$ in (\ref{eq:aprrox_hsnr}) characterizes the impact of the codeword length on the outage probability at high SNR. In this way, it can be shown that
\begin{equation}
\small
\lim\limits_{L_\text{I,A}\to \infty} U\left(L_\text{I,A}\right)=\alpha+\frac{2\alpha^2}{\beta}\log\left(\frac{4\alpha}{c_{\text{1,A}} c_{\text{2,A}}}\right).
\end{equation}
Hence, considering infinitely long codes, i.e., $L_{\text{I,A}}\to \infty$, the outage probability is given by
\begin{equation}\label{eq:hsnr_inf}
\small
\mathcal{P}^{L_\text{I}\to \infty}_{\text{out,A},P_{\text{s}}\to \infty}=\frac{\alpha  }{c_{\text{1,A}}  P_{\text{s}}}+\frac{\alpha\left(\log\left(4P_{\text{s}}\right)-\frac{1}{2}\right)}{c_{\text{1,A}} c_{\text{2,A}} P_{\text{s}}}-\frac{\beta \alpha+2\alpha^2\log\left(\frac{4\alpha}{c_{\text{1,A}} c_{\text{2,A}}}\right) }{2 c_{\text{1,A}} c_{\text{2,A}} P_{\text{s}}}.
\end{equation}
In Section \ref{sec:nemo}, we validate the accuracy
of the approximations proposed in Lemma 4 and Eq. (\ref{eq:hsnr_inf}) by comparing
them with the corresponding exact values that can be
evaluated numerically.
\section{Results}\label{sec:nemo}
In Fig. \ref{fig:err_P}, setting $L=1000$ and $R_{\text{TSR}}=R_{\text{PSR}}=2$ npcu, we study the accuracy of the analytical results presented in Lemmas 2-4. In particular, to obtain the results of Lemma 2 and Lemma 3, we have set $(M_1,M_2,M_3)=(2,2,1)$ and $M_4=2$, respectively, which determine the number of summation terms in (\ref{eq:aprrox_1}) and (\ref{eq:beh_out}).  As can be seen in Fig. \ref{fig:err_P}, the approximation approach of Lemma 2 matches the numerical results in the moderate/high SNR regime. However, this approach loses its tightness at low SNR. Also, the approximation scheme of Lemma 3 is tight in a broad range of SNR values. Moreover, with the parameter settings of the figure, the result of Lemma 3 provides better approximation than the approach of Lemma 2. Finally, the figure compares the system outage probability for the TSR and PSR protocols and, lower outage probability is achieved by the TSR protocol. Also, in harmony with Lemma 4, the outage probabilities of both relaying protocols, i.e. the PSR and TSR protocols, decrease linearly with $\frac{\log \text{SNR}}{\text{SNR}}$ at the high SNR regime. \par
Fig. \ref{fig:thr_L} investigates the impact of the total number of channel uses per slot, i.e., $L$, on the throughput of the PSR and TSR protocols. Here, the results are obtained by setting $\text{SNR}=30$ dB and $K=800$ nats. Fig. \ref{fig:thr_L} indicates that the results of the approximation approach of Lemmas 2 and 3 are tight for $L\geq 200$. For a given number of information nats $K$, there is an optimal value of $L$ for which the throughput is maximized (Fig. \ref{fig:thr_L}). Intuitively, this is because for small values of $L$ the code rate is high and outage occurs with high probability. For large $L$, on the other hand, although the message is correctly decoded by the destination with high probability, low throughput is observed due to low code rate. Thus, there is a tradeoff and the maximum throughput is achieved by a finite value of the codeword length. \par
Fig. \ref{fig:thr_ceof} shows the impact of the power splitting and the time sharing factors on the throughput of the PSR and TSR protocols. For both PSR and TSR protocols, the approximation approach of Lemma 3 is very tight such that the plots of
the numerical results and approximation approach of
Lemma 3 are superimposed. On the other hand, for small values of $\theta$ and $\alpha$, where the relay's transmission power is low, the approximation approach of Lemma 2 is loose comparatively, and, for this reason, the results of the approximation approach of Lemma 2 are plotted for its range of interest. In both the PSR and the TSR protocols, for small values of $\alpha$ and $\theta$, the relay cannot harvest enough energy and, consequently, low throughput is observed. On the other hand, as seen in Fig. \ref{fig:thr_ceof}a, for the TSR protocol, as $\theta$ increases, the codeword rate increases and the length of the information signal decreases. Thus, the outage probability remarkably increases and the throughput decreases for large values of $\theta$. In the PSR scheme, on the other hand, as $\alpha$ increases more energy is devoted to the energy harvesting and less energy is allocated to the information signal (Fig. \ref{fig:thr_ceof}b). Consequently, for large values of $\alpha,$ the throughput decreases with $\alpha$ remarkably. Therefore, there exist optimal values for $\alpha$ and $\theta$, in terms of throughput. Additionally, with a fixed SNR, as the number of information nats, i.e. $K$, increases, the optimal time sharing (resp. power splitting) factor, for the TSR (resp. PSR) protocol increases.\par
Then, setting $R_{\text{TSR}}=R_{\text{PSR}}=1$ npcu, Fig. \ref{fig:thr_dif} compares the throughput of the finite blocklength codes and the throughput with asymptotically long codewords. Also, in Fig. \ref{fig:thr_dif}, we compare the approximation approach of (\ref{eq:hsnr_inf}) with the numerical results. Fig. \ref{fig:thr_dif} shows that the high-SNR approximation approach of (\ref{eq:hsnr_inf}) tends towards the exact results as the SNR increases, as expected. Finally, since the throughput degradation that results from using short packets is negligible, the throughput achieved with long codewords can be approached with codewords of finite length.
\begin{figure}
\centering     
\includegraphics[width=3.4in, height=1.7in]{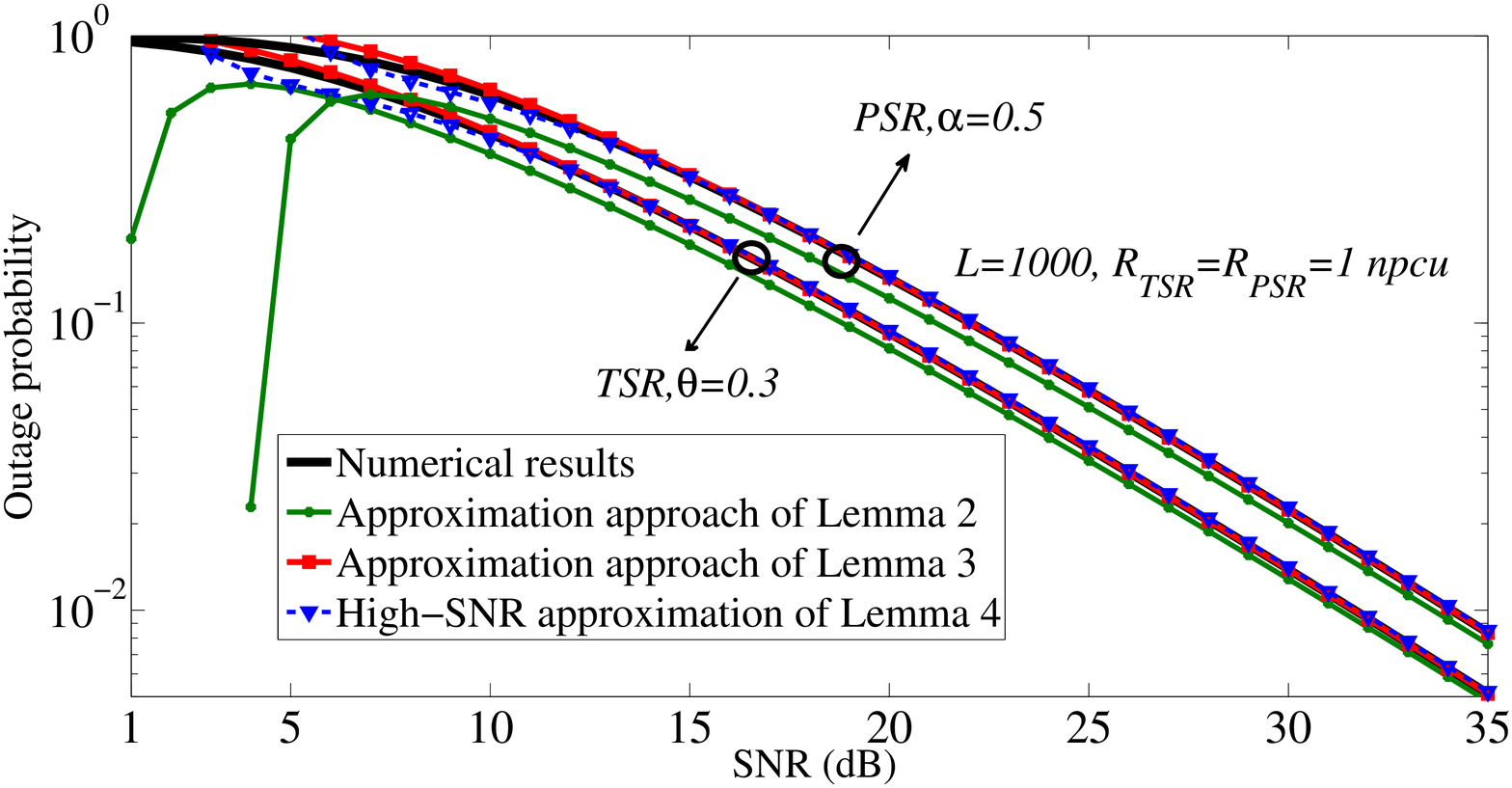}
\caption{Outage probability vs. SNR. $R_{\text{TSR}}=R_{\text{PSR}}=2$ npcu, $L=1000$.}
\label{fig:err_P}
\centering     
\includegraphics[width=3.4in, height=1.60in]{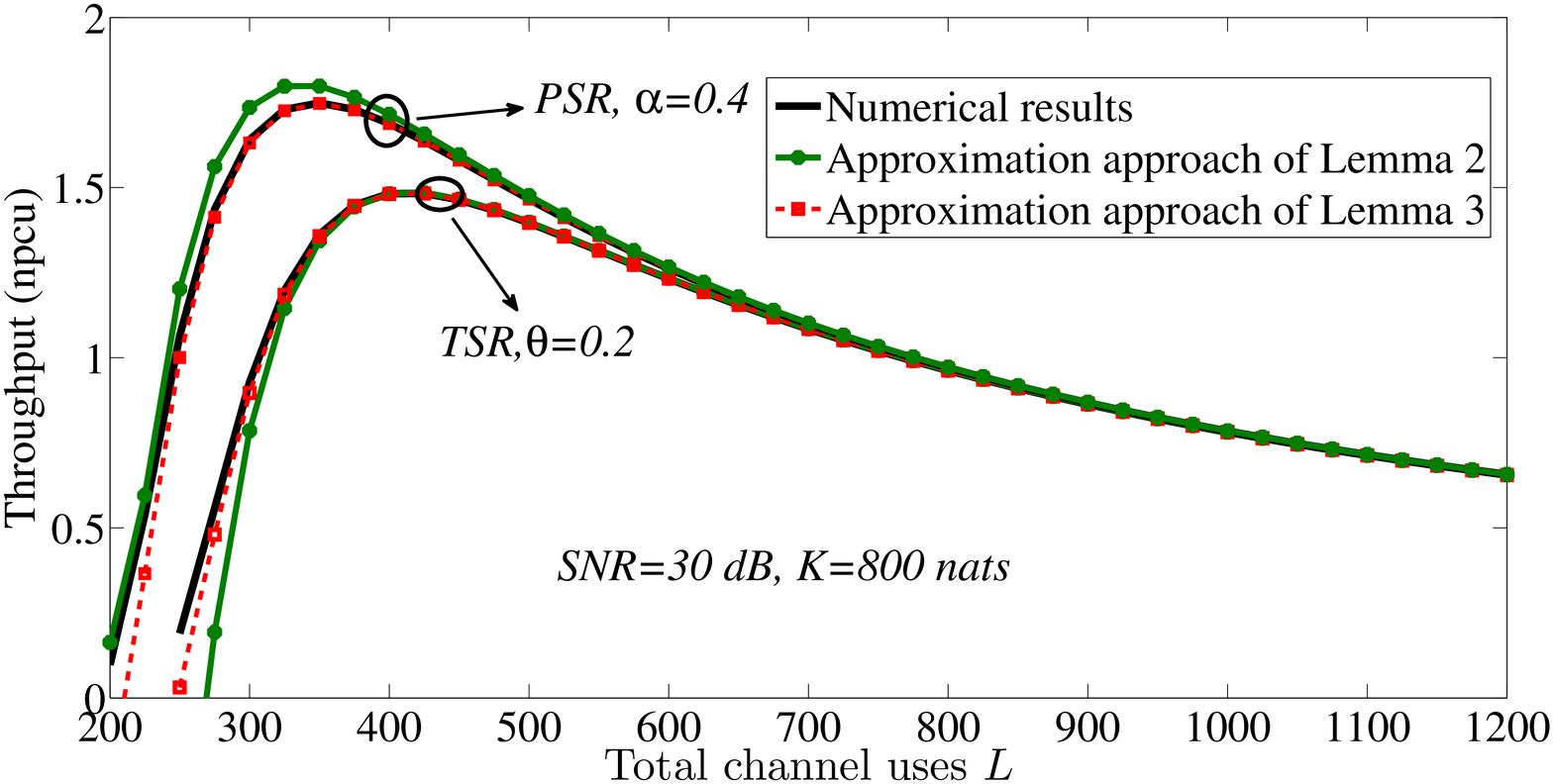}
\caption{Throughput of the PSR and TSR protocols vs. the total number of channel uses. $K=800$ nats, $\text{SNR}=30$ dB. }
\label{fig:thr_L}
\centering     
{\label{fig:sys_mod}\includegraphics[width=3.4in, height=1.55in]{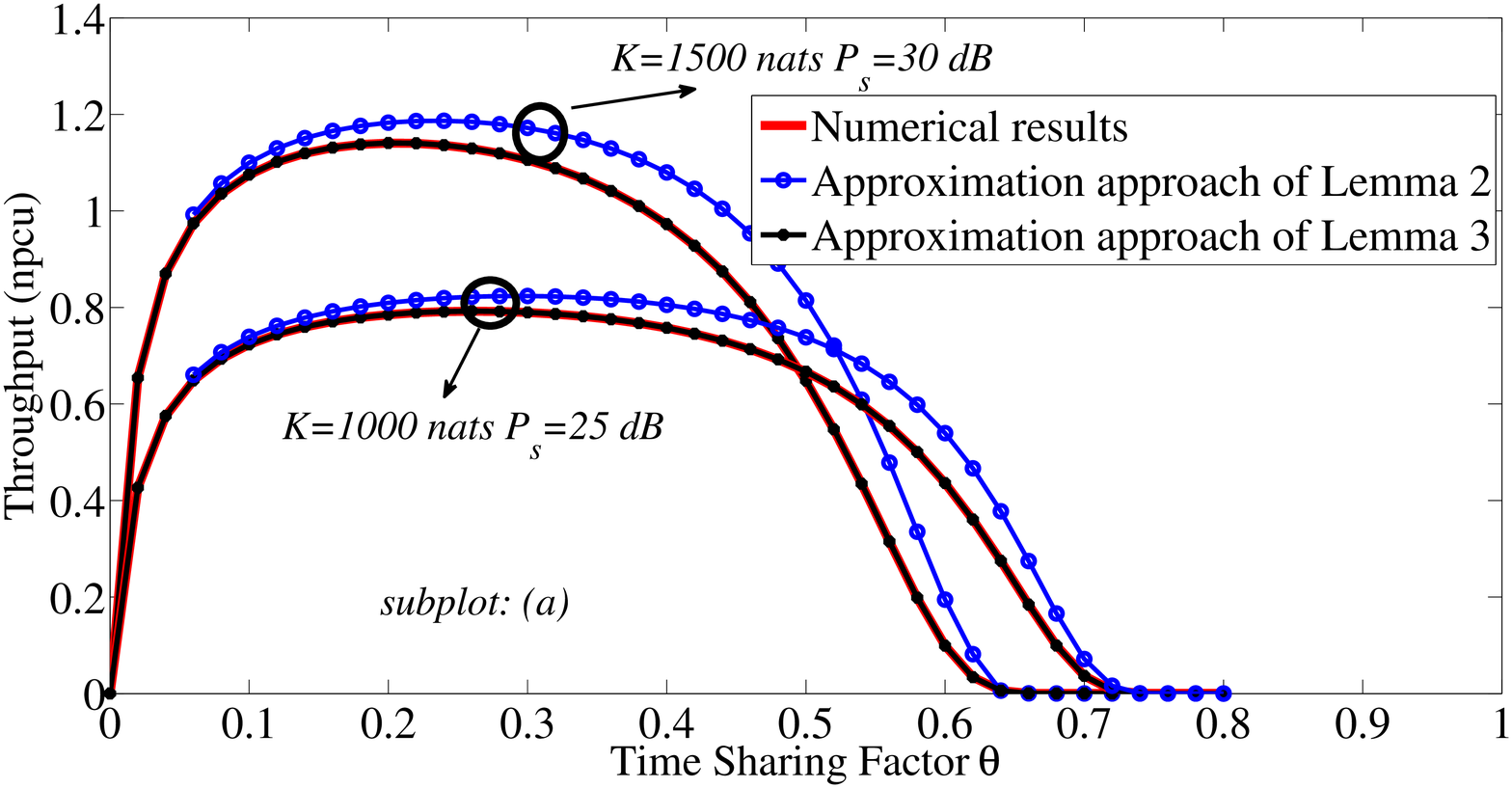}}
{\label{fig:block_patt}\includegraphics[width=3.4in, height=1.55in]{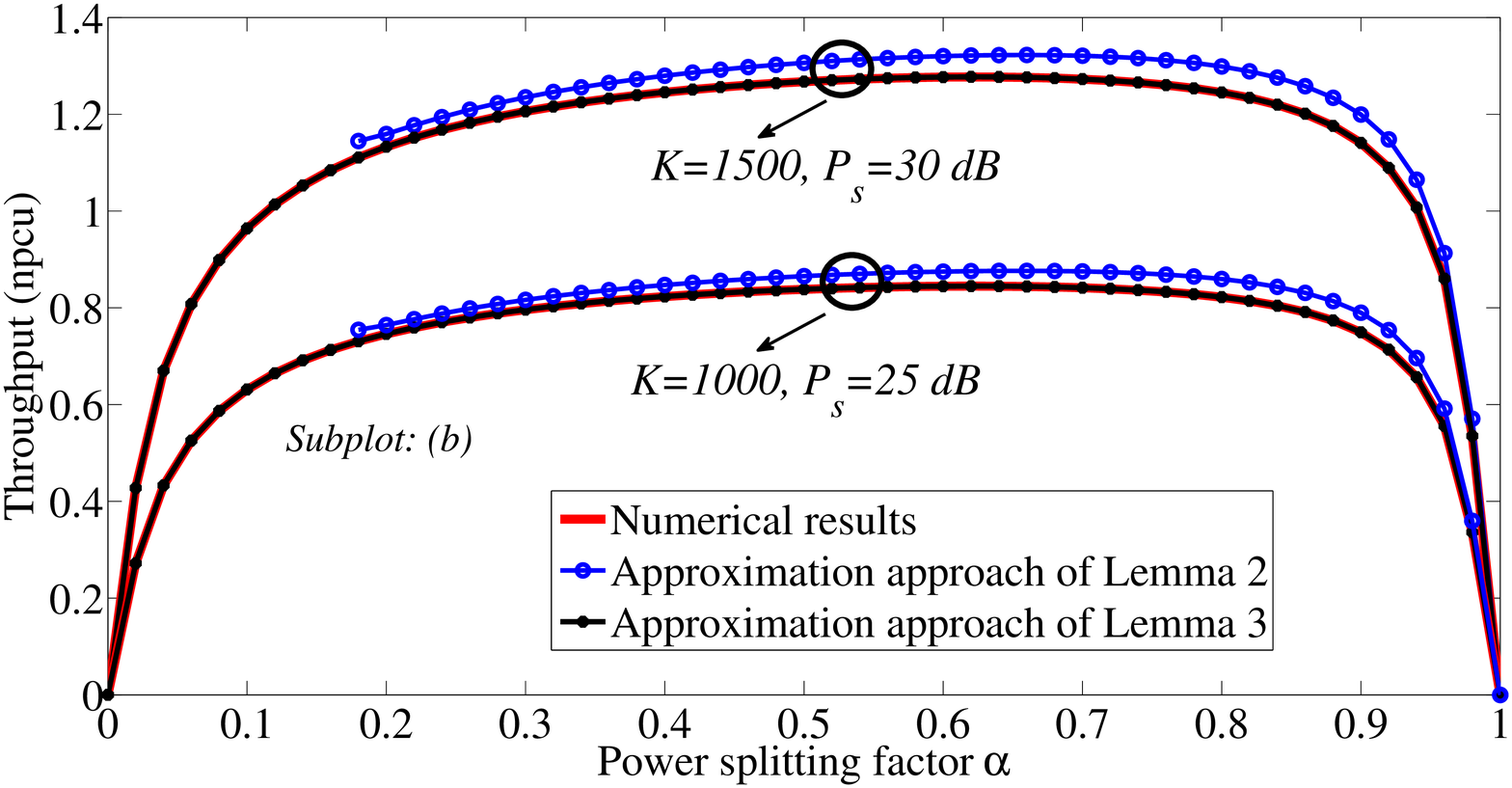}}
\caption{Throughput of the TSR (PSR) vs. time sharing (power splitting) factor. Subplot (a) shows the throughput of the TSR protocol. Subplot (b) shows the throughput of the PSR protocol. $L$=1000.  } 
\label{fig:thr_ceof}
\centering     
\includegraphics[width=3.4in, height=1.6in]{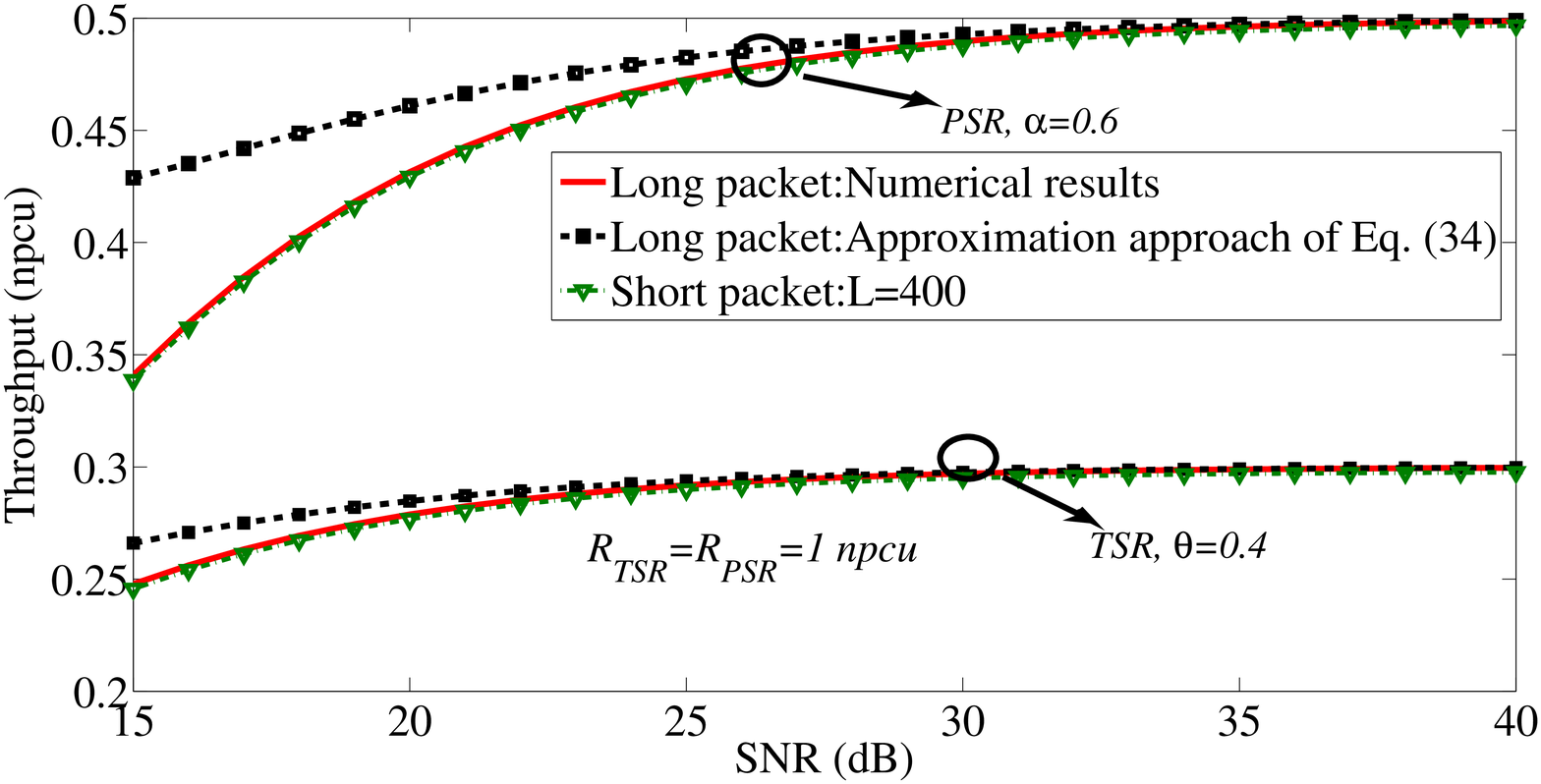}
\caption{Throughput vs. SNR. $R_{\text{TSR}}=R_{\text{PSR}}=1$ npcu. }
\vspace{-5mm}
\label{fig:thr_dif}
\end{figure}
\section{Conclusion}\label{sec:conc}
In this paper, we studied the outage probability and the throughput of an AF relay network with wireless information and energy transfer when codewords are of finite length. Two energy and information transfer protocols, namely, TSR and PSR, were considered. Accurate approximations  of the outage probability and the throughput were presented and validated with numerical results. Then, we investigated the impact of blocklength on the outage probability at high SNRs.  Our analytical and numerical results showed that, for a given number of information nats, the optimal throughput is achieved by using finite-length packets. Our results also demonstrated that the optimal selection of the time sharing and power splitting factors plays a key role in achieving high throughput and low outage probability in the PSR and TSR protocols. 
\bibliographystyle{IEEEtran} 
\bibliography{ref_icc_paper}

\end{document}